  \def\moverlay{\mathpalette\mov@rlay}
  \def\mov@rlay#1#2{\leavevmode\vtop{%
     \baselineskip\z@skip \lineskiplimit-\maxdimen
     \ialign{\hfil$#1##$\hfil\cr#2\crcr}}}
\numberwithin{equation}{section}
\newcommand{\HH}{\mathcal H}
\newcommand{\scri}{\mathcal{I}}
\theoremstyle{plain}
\newtheorem{thm}{Theorem}
\newtheorem{cor}[thm]{Corollary}
\newtheorem{lemma}[thm]{Lemma}
\newtheorem{definition}[thm]{Definition}
\newtheorem{remark}[thm]{Remark}
\theoremstyle{definition}
\newtheorem{case}{Case}
\newcounter{mnotecount}[section]
\newcounter{mymnotecount}[section]
\title{Smoothness of the future and past trapped sets in Kerr-Newman-Taub-NUT spacetimes}
\author[C.~F. Paganini and M.~A. Oancea]{ Claudio F. Paganini$^\dagger$,   and Marius A. Oancea$^\dagger$} 
\email{claudio.paganini@aei.mpg.de}
\email{marius.oancea@aei.mpg.de}
\address{$^\dagger$Albert Einstein Institute, Am M\"uhlenberg 1, D-14476 Potsdam,  Germany }
\begin{document}

\date{\today \ {\em File:\jobname{.tex}}}

\begin{abstract}
We consider the sets of future/past trapped null geodesics in the exterior region of a sub-extremal Kerr-Newman-Taub-NUT spacetime. We show that from the point of view of any timelike observer outside of such a black hole, trapping can be understood as two smooth sets of spacelike directions on the celestial sphere of the observer.
\end{abstract}

\maketitle

\tableofcontents

\section{Introduction}The trapped sets on the celestial sphere are closely related to the notion of black hole shadows. The shadow of the black hole is defined as the innermost trajectory on which light from a background source passing a black hole can reach the observer. The past trapped set of null geodesics through a point thus corresponds to the boundary of the black hole shadow. The first discussion of the shadow in Schwarzschild spacetimes can be found in \cite{synge_escape_1966}, and, for extremal Kerr at infinity, it was later calculated in \cite{bardeen_black_1973}. Analyzing the shadows of black holes is of direct physical interest as there is hope for the Event Horizon Telescope to be able to resolve the black hole in the center of the Milky Way well enough so that one can compare it to the predictions from theoretical calculations, see for example \cite{doeleman_event-horizon-scale_2008}. This perspective has led to a number of advancements in the theoretical treatment of black hole shadows in recent years \cite{cunha_shadows_2016,grenzebach_aberrational_2015,grenzebach_photon_2014,grenzebach_photon_2015,hioki_measurement_2009,li_measuring_2014}. 

Our main contribution in the present paper is the proof of Theorem \ref{thm:1} which is the first rigorous proof of the observations in \cite{grenzebach_photon_2014}. The significance of Theorem \ref{thm:1} is that we prove that for any subextremal Kerr-Newman-Taub-NUT spacetime, including Schwarzschild, the past and the future trapped sets at any regular point in the exterior region are smooth closed curves on the celestial sphere of any observer. We would like to stress that Theorem \ref{thm:1} therefore describes a property of trapping which does not change when going from Schwarzschild to Kerr-Newman-Taub-NUT.

Beyond its relevance for the discussion of black hole shadows the result is also of interest with respect to decay estimates for fields in the exterior region of Kerr black holes. Trapping is one of the biggest obstacle to prove such decay results. The area of trapping changes substantially when going from Schwarzschild, where it is restricted to one fix radius, to Kerr where the area of trapping covers a finite range of radii. This makes it a lot harder to prove decay in Kerr than it is for Schwarzschild. This difficulty was only recently overcome in \cite{2014arXiv1402.7034D} for all subextremal Kerr space times. To study the decay of fields in Kerr, it is thus important to understand which properties of trapping survive when going from Schwarzschild to Kerr.  

The main result of the present work will be used in our upcoming work \cite{shadows} where we discuss the maximum amount of information that can be extracted from the shape of the shadows. This paper is an extraction of the novel material contained in our lecture notes \cite{oldpaper} with an extension to the Kerr-Newman-Taub-NUT class of spacetimes.

\subsection*{Overview of this paper} In section \ref{sec:Kerr-Newman-Taub-NUT} we collect some background on the Kerr-Newman-Taub-NUT spacetime. 
In section \ref{sec:geodeq} we discuss the geodesic equations in its separated form. In section \ref{sec:sphere} we prove our main Theorem on the topological structure of the past and future trapped sets. 

\section{The Kerr-Newman-Taub-NUT Spacetime} \label{sec:Kerr-Newman-Taub-NUT} 
The Kerr-Newman-Taub-NUT family of spacetimes describes axially symmetric and stationary black hole solutions to the Einstein-Maxwell field equations.  We use Boyer-Lindquist (BL) coordinates ($t, r,\phi,\theta$), which have the property that the metric components are independent of $\phi$ and $t$. The metric has the form \cite{griffiths2009}:
\begin{equation}
\label{eq:metric}
\begin{split}
    \mathrm{d}s^2 =& \Sigma \left(\frac{1}{\Delta}\mathrm{d} r^2 + \mathrm{d}\theta^2 \right)+ \frac{1}{\Sigma}\left((\Sigma+ a \chi)^2 \sin^2(\theta) -\Delta \chi^2\right)\mathrm{d} \phi^2\\
    & \frac{2}{\Sigma}\left(\Delta\chi-a(\Sigma+a\chi)\sin^2(\theta)\right)  \mathrm{d}t\mathrm{d}\phi- \frac{1}{\Sigma}\left(\Delta-a^2\sin^2(\theta)\right)\mathrm{d}t^2,
 \end{split}
\end{equation}
 where
\begin{equation}
\Sigma = r^2 + (l+a \cos  \theta)^2,
\end{equation}
\begin{equation}
\chi=a\sin^2(\theta)-2 l (\cos(\theta)+C),
\end{equation}
\begin{equation}
\Delta (r)= r^2 - 2Mr + a^2-l^2+Q^2.
\end{equation}
Here, $M$ is the mass, $a$ is the angular momentum, $Q^2 = Q^2_e + Q^2_m$ is the charge, and $l$ is the NUT parameter. Also, there is another parameter, C, which was first introduced by Manko and Ruiz \cite{Manko-Ruiz}. This parameter modifies the singularities on the rotation axis, introduced by the Taub-NUT parameter $l$. For a discussion of the physical nature of these parameters and the nature of the singularities in these spacetimes see \cite{grenzebach_photon_2014}, and for a discussion of the rotation axis in the case $l\neq0$ see \cite{TaubNUTsingularities}. In our present work we will stay away from any irregular points in these spacetimes, in particular those that arise for $l\neq0$ on the rotation axis. \\The zeros of $\Delta(r)$ are given by:
\begin{equation}
r_\pm=M\pm\sqrt{M^2-a^2+l^2- Q^2}
\end{equation}
and correspond to the location of the event horizon at $r=r_+$ and of the Cauchy horizon at $r=r_-$. In the present work we are only interested in the exterior region of the black hole spacetime, hence $r\in(r_+,\infty)$.
 For our considerations it is useful to introduce an orthonormal tetrad. A convenient choice is:
\begin{subequations}\label{eq:tetrad}
\begin{align}
e_0 &= \left.\frac{(\Sigma + a \chi)\partial_t+ a\partial_\phi}{\sqrt {\Sigma\Delta(r)}}\right|_p ,&\qquad e_1&=\left.\sqrt{\frac{1}{\Sigma}}\partial_\theta\right|_p, \\ \nonumber
e_2&=\left.\frac{-(\partial_\phi+\chi \partial_t)}{\sqrt{\Sigma}\sin(\theta)}\right|_p,&\qquad e_3&=\left.-\sqrt{\frac{\Delta(r)}{\Sigma}} \partial_r\right|_p.
\end{align}
\end{subequations}
This frame is a natural choice as the principal null directions can be written in the simple form $e_0 \pm e_1$. 
\section{Geodesic Equations}\label{sec:geodeq}
We now focus our attention on null geodesics. The constants of motion for geodesics in Kerr-Newman-Taub-NUT spacetimes  are the mass, which we set equal to zero, the energy, the angular momentum with respect to the rotation axis of the black hole and Carter's constant \cite{carter_global_1968}:
\begin{subequations}\label{eq:com}
\begin{align}
0&=g_{\mu\nu}\dot \gamma^\mu\dot \gamma^\nu , \\
E&= - (\partial_t)^\nu\dot \gamma_\nu, \\
L_z&= (\partial_\phi)^\nu \dot \gamma_\nu, \\
K &= \sigma _ {\mu \nu} \dot \gamma ^ {\mu} \dot \gamma ^ {\nu},
\end{align}
\end{subequations} 
where $\sigma_{\mu\nu}$ is a Killing tensor given by, cf. \cite{PhysRevD.76.084036}:
\begin{equation}
\sigma_{\mu \nu} = \Sigma ( (e_1)_\mu (e_1)_\nu + (e_2)_\mu (e_2)_\nu ) - (l + a \cos(\theta))^2 g_{\mu \nu}
\label{eq:Killing}
\end{equation}
The constants of motion can be used to decouple the geodesic equation to a set of four first order ODEs, e.g. \cite[p. 242]{MR1647491}:
\begin{subequations}\label{eq:eom}
\begin{align}
 \dot t &= \frac{\chi(L_z-E \chi)}{\Sigma \sin^2(\theta)}+\frac{(\Sigma+ a\chi)((\Sigma+a \chi)E-a L_z)}{\Sigma\Delta(r)},\\
\dot \phi &=\frac{L_z-E \chi}{\Sigma \sin^2(\theta)}+\frac{a((\Sigma+a \chi)E-a L_z)}{\Sigma\Delta(r)} ,\\
\Sigma ^2 \dot r ^2  &= R (r,E,L_z,K)= ((\Sigma+a \chi)E-aL_z)^2-\Delta(r) K,
\label{eq:radial}\\
\Sigma ^2 \dot \theta ^2 &= \Theta (\theta, E, L_z, Q) =  K -  \frac{(\chi E-L_z)^2}{\sin ^2 \theta}  ,
\label{eq:theta}
\end{align}
\end{subequations}
where the dot denotes differentiation with respect to the affine parameter $\lambda$. \footnote{The radial and the angular equation can be entirely decoupled by introducing a new non-affine parameter $\kappa$ for the geodesics. It is defined by $\frac{\mathrm{d} \kappa}{\mathrm d \lambda}=\frac{1}{\Sigma}$.} For $E \neq 0$ the four equations are homogeneous in $E$. For the radial and the angular equations we have: 
\begin{align}
R(r,E,L_z,K)&=E ^2 R(r,1,L_E,K_E),\\
\Theta (\theta, E, L_z, K)& =E ^2 \Theta(r,1,L_E,K_E).
\end{align} 
where $L_E=L_z/E$ and $K_E=K/E^2$.
One of the most important features of geodesic motion in black hole spacetimes is the possibility of trapping. A geodesic is called trapped if its motion is bounded in a spatially compact region away from the horizon. In Kerr-Newman-Taub-NUT this corresponds to the geodesics motion being bounded in the $r$ direction. In \cite{grenzebach_photon_2014} it was shown that one can obtain a parametrization of the conserved quantities for trapped null geodesics in terms of their radial location:
\begin{subequations}\label{eq:altcomtrap}
\begin{align}
K_E=\left.\frac{16 r^2 \Delta(r)}{(\Delta'(r))^2}\right|_{r=r_{trapp}},\\
aL_E=\left.(\Sigma +a \chi)-\frac{4r \Delta(r)}{\Delta'(r)}\right|_{r=r_{trapp}},
\end{align}
\end{subequations}
where $ \Delta'(r)=2 r - 2 M  $ is the partial derivative of $\Delta(r)$ with respect to $r$. Further the following inequality for the area of trapping was derived in \cite{grenzebach_photon_2014}:
\begin{equation}\label{eq:areaoftrapp}
(4r\Delta(r)-\Sigma \Delta'(r))^2\leq 16 a^2r^2\Delta(r)\sin^2(\theta). 
\end{equation}
Further it was shown that all spherical null geodesics in the exterior region are unstable and therefore no non-spherical trapped null geodesics can exist in this region. Thus, the above set is complete, in the sense that it includes all trapped null geodesics that exist.

\section{Trapping as a Set of Directions}\label{sec:sphere}
 In this section we will  introduce a formal framework for our discussion. This allows us to give a more technical discussion of the trapped sets in Kerr-Newman-Taub-NUT spacetimes. 
\subsection{Framework}
First we have to introduce the basic framework and notations. Let $\mathcal{M}$ be a smooth manifold with Lorenzian metric $g$. At any point $p$ in $\mathcal{M}$ it is possible to find an orthonormal basis $(e_0,e_1,e_2,e_3)$ for the tangent space, with $e_0$ being the timelike direction. It is sufficient to treat only future directed null geodesics as the past directed ones are identical up to a sign flip in the parametrization and we are only interested in global properties of the null geodesics. The tangent vector to any future pointing null geodesic can be written as:
\begin{equation}\label{eq:tangentplus}
\dot\gamma( k|_p)|_p =\alpha ( e_0+ k_1 e_1+ k_2 e_2+ k_3 e_3)
\end{equation} 
where $\alpha= -g(\dot\gamma,e_0)$ and $k =(k_1,k_2,k_3)$ satisfies $|k|^2 =1$, hence $k\in S^2$. The geodesic is independent of the scaling of the tangent vector as this corresponds to an affine reparametrization for the null geodesic. We will therefore set $\alpha=1$ in the following discussion. 
\begin{remark}We will refer to the $S^2$ as the celestial sphere of a timelike observer at $p$, whose tangent vector is given by $e_0$, along the lines of e.g. \cite[p.8]{penrose_spinors_1987}.\end{remark}
Given we fixed a starting point $p$ and a tangent vector \eqref{eq:tangentplus} by choosing $k$ and $\alpha$, there exists a unique solution to the geodesic equation with this initial data. Thus, we make the following definition:
\begin{definition}Let $\gamma (k|_p)$ denote a null geodesic through $p$ whose tangent vector at $p$ is given by equation \eqref{eq:tangentplus}.
\end{definition}
\noindent Suppose now that $\mathcal{M}$ is the exterior region of a black hole spacetime with a complete future and past null infinity $\scri^\pm$ and a boundary given by the future and past event horizon $\HH^+\cup\HH^-$. We can then define the following sets on $S^2$ at every point $p$.

\begin{definition}\label{def:futinf}
The future infalling set: $\Omega_{\mathcal{H}^+}(p):=  \{k\in S^2 | \gamma(k|_p)\cap\mathcal{H}^+ \neq \emptyset \} $.\\
The future escaping set: $\Omega_{\mathcal{I}^+}(p):=  \{k\in S^2| \gamma(k|_p)\cap\mathcal{I}^+ \neq \emptyset \}$ .\\
The future trapped set: $\mathbb{T}_+(p) := \{k\in S^2 | \gamma(k|_p)\cap(\mathcal{H}^+\cup \mathcal{I}^+)  = \emptyset \}$.\\
The past infalling set: $\Omega_{\mathcal{H}^-}(p):=  \{k\in S^2 | \gamma(k|_p)\cap\mathcal{H}^- \neq \emptyset \}$.\\
The past escaping set: $\Omega_{\mathcal{I}^-}(p):= \{k\in S^2 | \gamma(k|_p)\cap\mathcal{I}^- \neq \emptyset \}$.\\
The past trapped set: $\mathbb{T}_- (p):= \{k\in S^2 | \gamma(k|_p)\cap(\mathcal{H}^-\cup \mathcal{I}^-)  = \emptyset \}$
\end{definition}
\noindent We further define the trapped set to be: 
\begin{definition}
The trapped set: $\mathbb{T}(p):= \mathbb{T}_+(p)\cap\mathbb{T}_-(p)$.
\end{definition}
\noindent The region of trapping in the manifold $\mathcal{M}$ is then given by:
\begin{definition}
Region of trapping: $\mathcal{A}:= \{p\in \mathcal{M}| \mathbb{T}(p)\neq \emptyset\}$.
\end{definition}
\begin{figure}[t!]
\centering
\centering
  \subfloat[\label{fig:causal1}]{%
    \includegraphics[width=.47\textwidth]{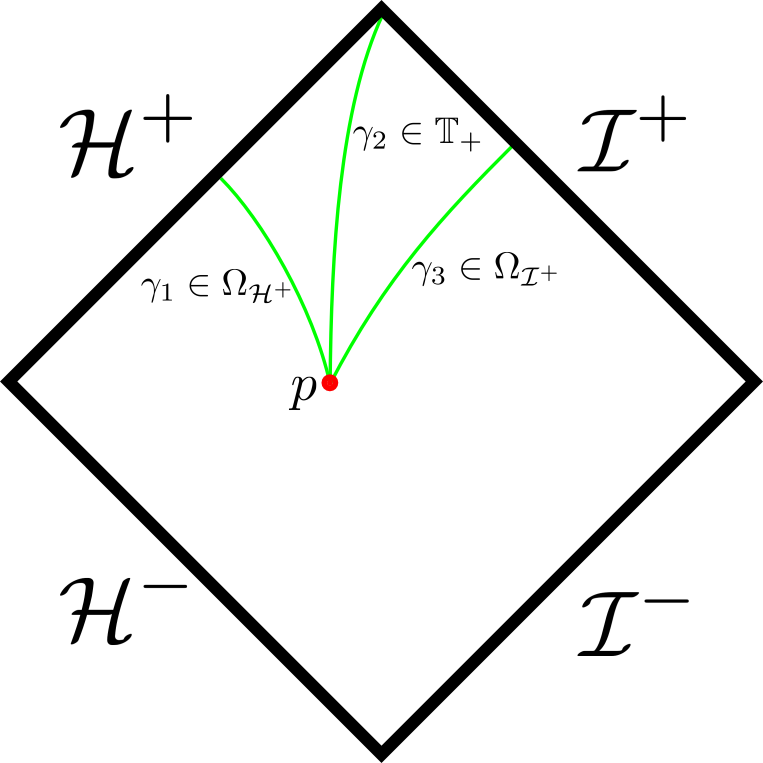}}\hfill
  \subfloat[\label{fig:causal2}]{%
    \includegraphics[width=.47\textwidth]{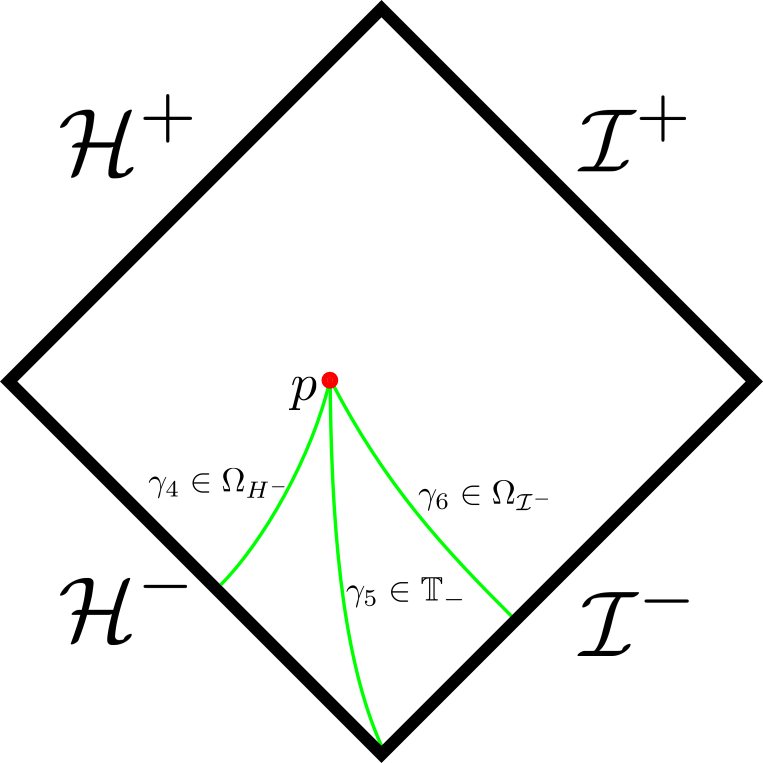}}\hfill
\caption{Conformal diagrams giving a schematic representation of elements of the sets in Definition \ref{def:futinf}. }
\label{fig:causal}
\end{figure}

\begin{definition}
We refer to the set $\Omega_{\mathcal{H}^-}(p)\cup \mathbb{T}_-(p)$ as the shadow of the black hole. 
\end{definition}
\noindent Suppose we place light sources close to infinity and thus certainly further out than the observer itself. The observer in the exterior region can then only observe light from such background sources along directions $k\in \Omega_{\mathcal{I}^-}(p)$ and hence the domain on the observers celestial sphere that belongs to the shadow is always going to be black. Therefore in any practical purposes an observer can only obtain information about the boundary of the shadow. In the the next section we will proof that this boundary is given exactly by $\mathbb{T}_-(p)$. 
\subsection{The trapped sets}\label{sec:Kerr-Newman-Taub-NUTshadow}
We will now discuss the properties of the sets  $\mathbb{T}_\pm(p)$ in Kerr-Newman-Taub-NUT. Note that the equations of motion for $r$ \eqref{eq:radial} and $\theta$ \eqref{eq:theta} have two solutions that differ only by a sign for a fixed combination of $E, L_z, K$. Therefore we know that the trapped sets will have a reflection symmetry across the $k_1=0$ and the $k_2=0$ planes. A sign change in $k_2$ maps both sets $\mathbb{T}_\pm(p)$ to themselves, while a sign flip in $k_1$ maps $\mathbb{T}_+(p)$ to $\mathbb{T}_-(p)$ and vice versa. We start by analyzing the sets for points of symmetry
\begin{definition}A Point of symmetry is a point for which there exists a one parameter family of diffeomorphism with closed orbits, which all leave the point itself invariant.\end{definition}
\begin{lemma}\label{lem:symmetry}The sets $\mathbb{T}_+(p)$ and $\mathbb{T}_-(p)$ are circles on the celestial sphere of any timelike observer at any regular point of symmetry in the exterior region of any subextremal Kerr-Newman-Taub-NUT spacetime. 
\end{lemma}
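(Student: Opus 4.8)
The plan is to exploit the symmetry directly. Let $p$ be a regular point of symmetry, so by definition there is a one-parameter family of diffeomorphisms $\Phi_s$ with closed orbits fixing $p$; in the Kerr-Newman-Taub-NUT exterior the relevant points of symmetry are the points on the rotation axis (where $\partial_\phi$ vanishes) and, in the Schwarzschild/Reissner-Nordström-like cases where the full rotation group acts, any point. The key observation is that each $\Phi_s$ is an isometry of $g$ fixing $p$, hence its differential $\mathrm{d}\Phi_s|_p$ is an element of the Lorentz group $O(1,3)$ acting on $T_p\mathcal{M}$, and it permutes null geodesics through $p$: $\Phi_s\circ\gamma(k|_p)$ is again a null geodesic through $p$ with tangent $\mathrm{d}\Phi_s|_p\,\dot\gamma(k|_p)$. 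Since $\Phi_s$ maps $\scri^\pm$ to $\scri^\pm$ and $\HH^\pm$ to $\HH^\pm$ (it is a symmetry of the whole exterior, fixing the conformal boundary and horizon setwise), it follows that $\Phi_s$ maps the sets $\Omega_{\mathcal{H}^\pm}(p)$, $\Omega_{\scri^\pm}(p)$ and $\mathbb{T}_\pm(p)$ to themselves. Concretely, the induced action of $\mathrm{d}\Phi_s|_p$ on the celestial sphere $S^2$ (the set of $k$ with $\alpha=1$) leaves $\mathbb{T}_\pm(p)$ invariant.

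The next step is to identify this induced action on $S^2$. Because the observer $e_0$ is adapted to the stationary Killing field (see the tetrad \eqref{eq:tetrad}) and each $\Phi_s$ commutes with $\partial_t$, one checks that $\mathrm{d}\Phi_s|_p$ fixes $e_0$, so it acts on $S^2\subset \{e_0\}^\perp$ as a one-parameter subgroup of $SO(3)$, i.e.\ a rotation about some fixed axis in the $k$-space $\Reals^3$. I would pin down that axis: for axis points of symmetry the family $\Phi_s$ is the azimuthal rotation generated by $\partial_\phi$, and at an axis point $e_3\propto\partial_r$ and the combination generating the rotation is tangent to the axis, so $\mathrm{d}\Phi_s|_p$ is rotation about the $k_3$-axis (the line $k_1=k_2=0$); in the spherically symmetric cases one gets the full $SO(3)$. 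In either case $\mathbb{T}_\pm(p)$ is a nonempty (by the region-of-trapping considerations and the explicit trapped-geodesic family of Section~\ref{sec:geodeq}), closed, $SO(2)$-invariant subset of $S^2$ for rotation about a fixed axis.

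Finally I would combine invariance with the reflection symmetries already recorded before the lemma statement: $\mathbb{T}_\pm(p)$ are symmetric across the $k_1=0$ and $k_2=0$ planes, and a sign flip in $k_1$ swaps $\mathbb{T}_+$ with $\mathbb{T}_-$. An $SO(2)$-invariant closed subset of $S^2$ is a union of "latitude" circles (level sets of the height function along the rotation axis) together with possibly the two poles; to conclude it is a single circle I would show (i) the poles are not in $\mathbb{T}_\pm(p)$ — the pole directions are $\pm e_3$, i.e.\ purely radial null geodesics, which have $L_z=0$ and, from \eqref{eq:radial}–\eqref{eq:theta} with the trapping parametrization \eqref{eq:altcomtrap}, are not trapped (they run off to $\scri^\pm$ or into $\HH^\pm$); and (ii) the latitude is unique, which follows because the trapped null geodesics through a fixed $p$ form, via \eqref{eq:altcomtrap} and \eqref{eq:areaoftrapp}, a set with a single connected one-parameter family of admissible $r_{\mathrm{trapp}}$ at that $p$ — or, more cleanly, because each admissible $k\in\mathbb{T}_\pm(p)$ has its $k_1$ (the $e_1$-component, i.e.\ the polar component) determined up to sign by the conserved quantities through the relation $\alpha k_1 e_1 \sim \dot\theta\,\partial_\theta$ and \eqref{eq:theta}, which at a fixed trapping radius pins the latitude. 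Assembling these, $\mathbb{T}_\pm(p)$ is exactly one latitude circle, as claimed.

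I expect the main obstacle to be step (ii): ruling out that $\mathbb{T}_\pm(p)$ consists of several disjoint latitude circles. Establishing $SO(2)$-invariance and excluding the poles is soft, but showing the latitude is unique really uses the structure of the trapped set in Kerr-Newman-Taub-NUT — the parametrization \eqref{eq:altcomtrap}, the instability of all exterior spherical orbits, and the region-of-trapping inequality \eqref{eq:areaoftrapp} — to argue that through a given point of symmetry there is (at most) one value of the polar angle $k_1$ compatible with a trapped orbit reaching that point, and hence a connected circle rather than a disjoint union.
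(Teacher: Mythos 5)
Your overall strategy is the same as the paper's: the one-parameter isometry group fixing $p$ acts on the celestial sphere as rotations about a fixed axis, so $\mathbb{T}_\pm(p)$ are invariant closed sets and hence unions of latitude circles, and the whole content of the lemma is then to show there is exactly one circle for each time orientation. You correctly identify that reduction and correctly flag the uniqueness of the latitude as the decisive step --- but you do not actually close it, and the first route you sketch for it would not work. A ``single connected one-parameter family of admissible $r_{\mathrm{trapp}}$'' through $p$ would generically produce a one-parameter family of latitudes (a band, or several circles), not a single circle. The ingredient you are missing is that at an axis point the family collapses to a single value: by \eqref{eq:theta}, any null geodesic reaching the axis must have $L_z=0$ (otherwise $(\chi E-L_z)^2/\sin^2\theta$ blows up), and for $L_z=0$ the trapping conditions \eqref{eq:altcomtrap} admit exactly one trapping radius $r_{\mathrm{trapp}}(L_z=0)$ and hence one value $K_E^{\mathrm{trapp}}(L_z=0)$. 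The paper then computes Carter's constant as a function of direction on the celestial sphere, obtaining $K=(1-k_z^2)\left((a+l)^2+z^2\right)$, which is monotone in $k_z^2$; so $K_E(k_z)=K_E^{\mathrm{trapp}}(L_z=0)$ has at most two solutions $\pm k_z^*$, one of which is the future trapped circle and the other the past trapped circle. Your ``cleaner'' alternative gestures at this but misidentifies the latitude coordinate as the $e_1$-component $k_1$; for rotation about the axis the latitude is the $k_3$ (or $k_z$) component, and in any case the Boyer--Lindquist tetrad \eqref{eq:tetrad} is singular on the axis ($e_2$ contains $1/\sin\theta$), which is why the paper introduces a separate Cartesian tetrad \eqref{eq:tetradct} there before evaluating $K$.

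Two smaller points. First, your claim that the poles $\pm e_3$ are excluded is not actually needed once one has the Carter-constant computation: $k_z=\pm 1$ gives $K=0\neq K_E^{\mathrm{trapp}}$, so the poles are automatically not solutions. Second, nonemptiness of $\mathbb{T}_\pm(p)$ deserves more than an appeal to ``region-of-trapping considerations'': the existence of trapped geodesics somewhere in the spacetime does not immediately give one through the particular axis point $p$. The paper's argument is topological: $k_z=-1$ falls into $\mathcal{H}^+$, $k_z=+1$ escapes to $\scri^+$, both $\Omega_{\mathcal{H}^+}(p)$ and $\Omega_{\scri^+}(p)$ are open by continuous dependence on initial data, and since $S^2$ is connected these two nonempty open sets cannot cover it, so $\mathbb{T}_+(p)\neq\emptyset$ (and likewise for $\mathbb{T}_-(p)$). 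You should supply an argument of this kind rather than assert nonemptiness.
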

\begin{proof}To determine the structure of $\mathbb{T}_\pm(p)$ we observe that when we pick a future/past trapped direction and apply the diffeomorphism, the spacial directions of $TM|_p$ are rotated around the vector pointing along the axis left invariant by the diffeomorphism. Therefore the future/past trapped direction traces proper circles on the celestial sphere. Therefore the future and past trapped set at such a point $p$ always correspond to a collection of circles independent of the details of the manifold or the location of $p$ therein. We are now going to show that in the spacetimes under consideration here $\mathbb{T}_\pm(p)$ consist of exactly one circle. For the spherically symmetric spacetimes this is a well known fact. For an observer located at a regular point on the rotation axis of Kerr-Newmann-Taub-NUT black holes we can apply the following argument. Note that for $l\neq0$ and $a\neq0$ we have to choose $C=\pm1$ for the procedure to apply to the regular part of the rotation axis in these cases. For all other values of $C$ both parts of the rotation axis are singular. Hence the discussion here does not apply to those cases.\\
From equation \eqref{eq:theta} it is clear that null geodesics that can reach the rotation axis have to have $L_z=0$. For the case $L_z=0$ it is clear that there exists only one value of $K_E^{trapp}(L_z=0)$ and $r_{trapp}(L_z=0)$. To treat an observer on the rotation axis we need to introduce a new coordinate system which covers the axis. We will use Cartesian coordinates $(t, x, y, z)$, which are related to the Boyer-Lindquist coordinates $(t, r, \theta, \phi)$ by the following relations:
\begin{equation}
\begin{aligned}
t &= t \\
x &= r \sin(\theta) \cos(\phi) \\
y &= r \sin(\theta) \sin(\phi) \\
z &= r \cos(\theta) 
\end{aligned}
\end{equation}
Then the following set is an orthonormal thetrad on the rotation axis ($x=y=0$):
\begin{align}\label{eq:tetradct}
\tilde e_0 &= \left.-\sqrt{\frac{z^2+(a+l)^2}{z^2-2m z +a^2 + Q^2-l^2 }} \partial_t \right|_p ,&\qquad 
\tilde e_1&=\left. \frac{z}{\sqrt{z^2+(a+l)^2}}\partial_x \right|_p, \\ \nonumber
\tilde e_2&=\left. \frac{z}{\sqrt{z^2+(a+l)^2}}\partial_y \right|_p,&\qquad 
\tilde e_3&=\left. \sqrt{\frac{z^2-2m z +a^2 + Q^2-l^2}{z^2+(a+l)^2}}\partial_z \right|_p. 
\end{align}
As $\tilde e_3$ points along the rotation axis and is thus left invariant under a rotation of the manifold, we know that along the trapped set $k_x^2+k_y^2=const.$ will be satisfied. Calculating Carter's constant from the tangent vector on the rotation axis we see that it is given by:
\begin{equation}
K = \sigma_{\mu \nu}   \dot \gamma^\mu \dot \gamma^\nu = (1-k_z^2) ((a+l)^2 + z^2)
\end{equation}
In the above expression, $\sigma_{\mu \nu}$ is the Killing tensor expressed in Cartesian coordinates, on the rotation axis ($x=y=0$).We see immediately that on the celestial sphere there exists at most two values of $k_z$ such that:
\begin{equation}
\left. K_E(k_z) \right|_p=\left. \frac{K(k_z)}{E^2(\tilde e_0)}\right|_p=K_E^{trapp}(L_z=0)
\end{equation}
These correspond to the future and the past trapped set. If the two solutions coincide we are at $z=r_{trapp}(L_z=0)$ and the directions are both future and past trapped. It remains to show that there will always be at least one value of $k_z$, such that the condition for future/past trapping is satisfied. We know that $k_z=-1$ always hits the horizon, while $k_z=1$ always escapes to infinity and the infalling and outgoing sets are open due to the continuous dependence on initial data for solutions to the geodesic equation. Therefore, the trapped sets have to be non-empty.
\end{proof}
In the following we will use the parametrization of \cite{grenzebach_photon_2014}. We introduce the coordinates $\rho \in[0,\pi]$ and $\psi \in[0,2\pi)$ on the celestial sphere. Thus \eqref{eq:tangentplus} can be written as:
\begin{equation}\label{eq:tangentsph}
\dot\gamma( \rho,\psi)|_p =\alpha ( e_0+ \cos(\rho) e_1+ \sin(\rho) \cos(\psi) e_2+ \sin(\rho)\sin(\psi) e_3)
\end{equation}
The principal null direction towards the black hole is given by $\rho=\pi$. Following \cite{grenzebach_photon_2014}  one finds the following parametrization of the celestial sphere in terms of constants of motion:
\begin{subequations}\label{eq:comonsphere}
\begin{align}
\sin(\psi)&= \left.\frac{\tilde L_E +a \cos^2 (\theta)+2l\cos(\theta)}{\sqrt{K_E}\sin(\theta)}\right|_{\theta(p)}\\
\sin(\rho)&= \left. \frac{\sqrt{\Delta K_E}}{r^2 +l^2 - a\tilde L_E}\right|_{r(p)}
\end{align}
\end{subequations}
where
\begin{equation}
\tilde L_E=L_E -a +2lC
\end{equation} 
By plugging the relations \eqref{eq:altcomtrap} in the above equations we obtain a parametrization of the future and past trapped set in terms of the radius to which a particular future trapped direction is converging to. We use $x$ to parametrize the trapped sets with  $x=r_{trapp}$.  We will show in the following proof that the parameter $x$ is restricted to the interval $[r_{min}(\theta),r_{max}(\theta)]$. Here $r_{min}(\theta)$ and $r_{max}(\theta)$ are given as the intersection of a cone of constant $\theta$ with the boundary of the area of trapping. The parametrization is given by: 
\begin{subequations}\label{eq:parametrized}
\begin{align}
f(x)&:=\sin(\psi)= \frac{\Delta'(x)\{x^2 +(l+a \cos [\theta(p)])^2\}-4 x \Delta(x)}{4ax\sqrt{\Delta(x)} \sin(\theta(p))}\label{eq:psitox}\\ \nonumber\\
h(x)&:=\sin(\rho)= \frac{4x\sqrt{\Delta(r(p))\Delta(x)}}{\Delta'(x)(r(p)^2-x^2) + 4x \Delta(x)}\label{eq:rhotox} \\ \nonumber 
\end{align}
\end{subequations}
We are now ready to prove our main Theorem.
\begin{thm}\label{thm:1}
The sets $\mathbb{T}_+(p)$ and $\mathbb{T}_-(p)$ are smooth curves on the celestial sphere of any timelike observer at any point in the exterior region of any subextremal Kerr-Newman-Taub-NUT spacetime. 
\end{thm}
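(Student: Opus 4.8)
The plan is to dispose of the points of symmetry with Lemma~\ref{lem:symmetry} and, for every other point, to exhibit $\mathbb{T}_+(p)$ and $\mathbb{T}_-(p)$ directly from the parametrisation \eqref{eq:parametrized}; the only genuinely delicate issue is smoothness at the two parameter values at which the parameter $x=r_{trapp}$ turns around. First I would reduce: if $p$ lies on the regular part of the rotation axis, or if $a=0$ (so that the metric \eqref{eq:metric} is spherically symmetric), then $p$ is a regular point of symmetry and Lemma~\ref{lem:symmetry} already gives that $\mathbb{T}_\pm(p)$ are circles. So from now on assume $a\neq0$ and $\theta(p)\in(0,\pi)$, and work in the tetrad \eqref{eq:tetrad} with the angles $(\rho,\psi)$ of \eqref{eq:tangentsph}.

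Next I would identify the trapped directions and the range of $x$. Since all spherical null geodesics in the exterior are unstable, a future-directed null geodesic through $p$ is future (respectively past) trapped exactly when its constants of motion coincide with those of a spherical photon orbit at some radius $x\in(r_+,\infty)$ — so that $R(\,\cdot\,)$ in \eqref{eq:radial} has a double zero at $x$, parametrised by \eqref{eq:altcomtrap} — and the geodesic actually reaches $p$, i.e. $\Theta(\theta(p))\geq0$ and $R(r(p))\geq0$, the sign of $\dot r$ selecting future versus past when $r(p)\neq x$. Feeding \eqref{eq:altcomtrap} into \eqref{eq:theta} shows $\Theta(\theta(p))\geq0$ is precisely $|f(x)|\leq1$, i.e. \eqref{eq:areaoftrapp} with $r\mapsto x$, and I would check that $\{x\in(r_+,\infty):|f(x)|\leq1\}$ is a single compact interval $[r_{min}(\theta(p)),r_{max}(\theta(p))]$, using $\Delta>0$ and $\Delta'=2(r-M)>0$ on $(r_+,\infty)$ (subextremality gives $r_+>M$) together with the behaviour of the underlying polynomial at the ends of $(r_+,\infty)$. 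The $\theta$-independent identity $\Sigma+a\chi=x^2+a^2+l^2-2alC$ gives $r(p)^2+l^2-a\tilde L_E=r(p)^2-x^2+4x\Delta(x)/\Delta'(x)$, so the denominator in \eqref{eq:rhotox} equals $\Delta'(x)\,(r(p)^2+l^2-a\tilde L_E)$ and $R(r(p))/E^2=(r(p)^2+l^2-a\tilde L_E)^2-\Delta(r(p))K_E$; a short computation then shows $R(r(p))\geq0$, equivalently $|h(x)|\leq1$, holds on the whole interval. Hence $\mathbb{T}_+(p)$ and $\mathbb{T}_-(p)$ are each the image of $x\in[r_{min}(\theta(p)),r_{max}(\theta(p))]$ under \eqref{eq:parametrized}, together with the finitely many sign choices of $\cos\psi$ and $\cos\rho$ compatible with future (respectively past) trapping; and since the numerator of $h$ in \eqref{eq:rhotox} is strictly positive, $\sin\rho=h>0$ along the curve.

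For the smoothness itself I would argue as follows. On the open interval $f$ and $h$ are real-analytic: the denominator $4ax\sqrt{\Delta(x)}\sin\theta(p)$ of $f$ is nonzero since $a\neq0$, $\Delta(x)>0$, $\sin\theta(p)\neq0$; and the denominator of $h$, namely $\Delta'(x)\,(r(p)^2+l^2-a\tilde L_E)$ with $\Delta'(x)>0$, is nonzero because $|h|\leq1$ forces $(r(p)^2+l^2-a\tilde L_E)^2\geq\Delta(r(p))K_E>0$. Thus $x\mapsto(k_1,k_2,k_3)=(\cos\rho,\sin\rho\cos\psi,\sin\rho\sin\psi)$ is smooth and, where $|f|<1$ and $h<1$, is a regular embedded arc. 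The delicate points are the endpoints $x=r_{min},r_{max}$, where $f=\pm1$, so $\cos\psi=0$ and $k_2=0$, and the two branches on which $\cos\psi\gtrless0$ — which are exchanged by the reflection $k_2\mapsto-k_2$, a symmetry of $\mathbb{T}_\pm(p)$ — come together. If $f$ attains $\pm1$ transversally there, i.e. $f'\neq0$, then $1-f(x)^2$ vanishes to exactly first order, and the substitution $u=\pm\sqrt{|x-x_0|}$ (the two signs labelling the two branches) turns $\cos\psi$, and with it $(k_1,k_2,k_3)$, into a smooth function of $u$ with nonvanishing $u$-derivative across $u=0$; here one also uses $h<1$ at the endpoint, which holds unless $r(p)=x_0$, i.e. unless $p\in\partial\mathcal{A}$, in which case the curve passes through a principal null direction and one verifies the analogous statement in an adapted chart on $S^2$, using that $1-h$ likewise vanishes to finite order in $x$ there. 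Following $x$ from $r_{min}$ to $r_{max}$ on one branch and back on the other then produces a smoothly immersed closed loop, which is embedded because, by the $k_2\mapsto-k_2$ symmetry, the two branches meet only on $\{k_2=0\}$ and there only at the two endpoint directions.

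The main obstacle is exactly this transversality: I need $f'(x)\neq0$ whenever $f(x)^2=1$ — equivalently, that constant-$\theta$ cones meet the boundary $\partial\mathcal{A}$ of \eqref{eq:areaoftrapp} transversally — since a tangential contact would turn the junction into a corner or a transverse self-intersection rather than a smooth point. I would establish it by showing that the polynomial
\[
 g(x)=\bigl(\Delta'(x)\{x^2+(l+a\cos\theta(p))^2\}-4x\Delta(x)\bigr)^2-16a^2x^2\Delta(x)\sin^2\theta(p)
\]
has no repeated root in $(r_+,\infty)$, either by an explicit discriminant or monotonicity argument exploiting the quadratic form of $\Delta$, or by identifying a double root of $g$ with the marginal case $R''(r_{trapp})=0$ of a degenerate spherical photon orbit, which is excluded since every spherical null geodesic in the exterior is unstable. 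A secondary point, handled in the same spirit, is that whenever $h$ reaches $1$ at an interior parameter value the contact is non-degenerate, so that the same $\sqrt{\,\cdot\,}$-reparametrisation keeps the curve smooth there as well.
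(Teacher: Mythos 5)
Your proposal is essentially correct in structure and it isolates the right technical issue, but it takes a genuinely different route from the paper at the decisive step. You parametrize the curve by $x=r_{trapp}$ and then must repair the degeneracy at the endpoints $x=r_{min},r_{max}$ (where $|f|=1$ and the two $\cos\psi$-branches meet) by hand, via a transversality claim $f'\neq 0$ plus a $u=\pm\sqrt{|x-x_0|}$ reparametrization. The paper instead computes $df/dx$ explicitly and observes that it is strictly negative on all of $(r_+,\infty)$ (with $f\to+\infty$ at $r_+$ and $f\to-\infty$ at infinity); it then \emph{inverts} $f$ and uses $\psi$ as the curve parameter, so that $x(\psi)=f^{-1}(\sin\psi)$ is automatically smooth and $2\pi$-periodic, the square-root behaviour at the turning points being absorbed into $\sin\psi$. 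That single computation buys, all at once, the three things you treat separately: the parameter range $\{|f|\le 1\}$ being one compact interval, the endpoint transversality, and the smooth closing-up of the loop. The remaining branch-switch in $\rho$ at the interior point where $h=1$ (when $p\in\mathcal{A}$) is handled in the paper by the gluing lemma of Appendix \ref{app:A}, which is the same idea as your ``secondary point.''

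One sub-step of your plan I would not let stand as written: identifying a double root of your polynomial $g$ with a degenerate spherical photon orbit ($R''(r_{trapp})=0$). A double root of $g$ at fixed $\theta(p)$ expresses tangency of the cone $\theta=\theta(p)$ with $\partial\mathcal{A}$, i.e.\ a degeneracy of the \emph{angular} turning point of the spherical orbit, not of the radial potential, so instability of the orbits does not obviously exclude it. Your alternative route (direct computation) does work and is exactly what the paper does: the closed-form derivative of \eqref{eq:psitox} has numerator proportional to $(M-x)^3-M(M^2-a^2-Q^2+l^2)$, which is negative for all $x>r_+>M$ by subextremality, so $f'<0$ everywhere and in particular at the endpoints. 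I would also ask you to be more explicit at the doubly degenerate configuration $r(p)=x_{min}$ or $x_{max}$ (where $|f|=1$ and $h=1$ simultaneously), which you defer to ``an adapted chart''; the paper's Case 3 treats it by showing the curve meets $\rho=\pi/2$ tangentially.
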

\begin{proof}
 We start by analyzing the right-hand side of \eqref{eq:psitox}:
\begin{equation}
\frac{d f(x)}{d x}=\frac{\{x^2 + (l+a \cos [\theta])^2\}((M-x)^3-M(M^2-a^2-Q^2+l^2))}{2 a x^2 \Delta^{3/2}\sin(\theta)}
\end{equation} 
which is strictly negative for $x\in(r_+,\infty)$. Further the limit of the right hand side of \eqref{eq:psitox} is $\infty$ for $x\rightarrow r_+$ and $-\infty$ for $x\rightarrow\infty$. Therefore, the function $f$ is strictly monotone in the interval  $x\in(r_+,\infty)$ and, hence invertible. Then, $x(\psi)=f^{-1}(\sin(\psi))$ is a smooth function of $\psi$ with extrema at the extremal points of $\sin(\psi)$. As was observed in \cite{grenzebach_photon_2014} the minimum $x=r_{min}(\theta(p))$ at $\psi=\pi/2$ and the maximum of $x=r_{max}(\theta(p))$ at $\psi=3\pi/2$ correspond exactly to the intersections of a cone with constant $\theta$ with the boundary of the region of trapping. This can be seen by setting the left hand side of \eqref{eq:psitox} equal to $\pm1$, and comparing to \eqref{eq:areaoftrapp}. So we have that  $x(\psi)\in[r_{min}(\theta(p)),r_{max}(\theta(p))]$ for all values of $\theta(p)$.

Now we take a look at the right hand side of equation \eqref{eq:rhotox}:
\begin{equation}
\label{eq:hxder}
\frac{d h(x)}{d x}= \frac{2(r^2-x^2)\Delta(r)((x-M)^3+ M(M^2-a^2-Q^2 +l^2))}{\sqrt{\Delta(x)\Delta(r)}((r^2-x^2)\frac{\Delta'(x)}{2}+2x\Delta(x))^2} .
\end{equation} 
This is positive when $x<r(p)$ and negative when $x>r(p)$. The denominator never vanishes for $x\in(r_+,\infty)$ because:
\begin{equation}
(4x\Delta(x)+ (r(p)^2-x^2)\Delta'(x))|_{\{r(p)=r_+, x=r_+\}}=0
\end{equation}
and 
\begin{align}
\frac{d}{dx}(4x\Delta(x)+ (r(p)^2-x^2)\Delta'(x))&=2(3x^2-6Mx+2(a^2-l^2+Q^2)+r(p)^2)>0, \label{eq:first}\\
\frac{d}{dr(p)}(4x\Delta(x)+ (r(p)^2-x^2)\Delta'(x))&=2r(p)\Delta'(x)>0,
\end{align}
where we used $r(p)>r_+>M>\sqrt{a^2-l^2+Q^2}$ in \eqref{eq:first}. \\
If we set $x=r(p)$ in \eqref{eq:rhotox} then the right-hand side is equal to $1$. Furthermore in any of the limits $r(p)\rightarrow r_+$, $r(p)\rightarrow \infty$, $x\rightarrow r_+$, and as $x\rightarrow\infty$ it goes to zero. 
\begin{case}If $p\notin \mathcal{A}$ hence if $r(p)\notin[x_{min}(\theta(p)),x_{max}(\theta(p))]$ then the two functions
\begin{align}
\rho_1(\psi)=\arcsin(h(x(\psi)))&: [0,2\pi)\rightarrow [\rho_{1_{min}},\rho_{1_{max}}]\subset \left(0,\frac{\pi}{2}\right)\\
\rho_2(\psi)=\pi-\arcsin(h(x(\psi)))&: [0,2\pi)\rightarrow[\rho_{2_{min}},\rho_{2_{max}}]\subset\left(\frac{\pi}{2},\pi\right)
\end{align}
are both smooth with $\rho_1(0)=\rho_1(2\pi)$ and $\rho_2(0)=\rho_2(2\pi)$. 
If $p$ is between the region of trapping and the asymptotically flat end, the function $\rho_2(\psi)$ corresponds to $\mathbb{T}_+(p)$ and $\rho_1(\psi)$ corresponds to $\mathbb{T}_-(p)$. Because $(\pi/2,\pi]$ corresponds to the geodesic with $\dot r< 0$. If $p$ is between the region of trapping and the horizon then the role of $\rho_1(\psi)$ and $\rho_2(\psi)$ are switched.
\end{case}
\begin{case}If $p\in \mathcal{A}$ we need to do some extra work. For simplicity we only consider the interval $\psi\in[\pi/2,3\pi/2]$ as the rest follows by symmetry of $\sin(\psi)$ in $[0,\pi]$ across $\pi/2$ and in $[\pi,2\pi]$ across $3\pi/2$. We define:
\begin{equation}
    \psi_0(r(p))= \pi-\arcsin\left(\frac{\Delta'(r(p))\{r(p)^2 +(l+a \cos [\theta(p)])^2\}-4 r(p) \Delta(r(p))}{4r(p)\sqrt{\Delta(r(p))}a\sin(\theta(p))}\right).
\end{equation}
The two functions
\begin{align}
\rho_3(\psi)&=\begin{cases}
\arcsin(h(x(\psi))) &\text{ if } \psi\in[\pi/2,\psi_0(r(p))]\\
\pi-\arcsin(h(x(\psi)))&\text{ if } \psi\in (\psi_0(r(p)),3\pi/2]
\end{cases}\\
\rho_4(\psi)&=\begin{cases}
\pi-\arcsin(h(x(\psi))) &\text{ if } \psi\in[\pi/2,\psi_0(r(p))]\\
\arcsin(h(x(\psi)))&\text{ if } \psi\in (\psi_0(r(p)),3\pi/2]
\end{cases}
\end{align}
are then smooth on $[\pi/2,3\pi/2]$. For a proof see Appendix \ref{app:A} and note that at $\psi_0$, $h(x(\psi))$ satisfies the conditions required in the appendix. Since $p\in \mathcal{A}$ we have that $x_{min}(\theta(p))<r(p)<x_{max}(\theta(p))$. Therefore the geodesic on the celestial sphere parametrized by $x_{max}(\theta(p))$ has to have $\dot r>0$ and thus has to be in $[0,\pi/2)$.  On the other hand the geodesic on the celestial sphere parametrized by $x_{min}(\theta(p))$ has to have $\dot r<0$ and thus has to be in $(\pi/2,\pi]$. In fact by the monotonicity of the right hand side of \eqref{eq:psitox} and the fact that $x(\psi_0)=r(p)$ we know that for $\psi \in [\pi/2,\psi_0)$ we have $x(\psi)<r(p)$ and for $\psi \in (\psi_0,3\pi/2]$ we have $x(\psi)>r(p)$. Thus we can conclude that for $p\in \mathcal{A}$, $\rho_4$ corresponds to $\mathbb{T}_+(p) $ and $\rho_3$ corresponds to $ \mathbb{T}_-(p)$ and thus both sets are smooth.\end{case}
\begin{case}In the special case when $r(p)= x_{max}(\theta(p))$ or $r(p)= x_{min}(\theta(p))$ the functions $\rho_1$ and $\rho_2$,  which describe $\mathbb{T}_\pm(p)$, do reach $\rho=\pi/2$ at $\psi=3\pi/2$ (for $r(p)= x_{max}(\theta(p))$) or $\psi=\pi/2$ (for $r(p)= x_{min}(\theta(p))$) respectively. However since in these cases we have that
\begin{equation}
    \frac{d^2}{d\psi^2}(h(x(\psi)))=0 
\end{equation}
the two sets meet at this point tangentially and do not cross over into the other hemisphere.\end{case}
Together with Lemma \ref{lem:symmetry} this concludes the proof. 
\end{proof}
\begin{remark}
In \cite{grenzebach_photon_2014} it was observed that $\rho_{max}$ of $\mathbb{T}_+(p) $ always corresponds to the trapped geodesic with $x_{min}(\theta(p))$ and $\rho_{min}$ of $\mathbb{T}_+(p) $ always corresponds to the trapped geodesic with $x_{max}(\theta(p))$ . When $p$ is outside the region of trapping $h(x)|_{x_{max}}$ is a local maximum of $h(x(\psi))$ (as a function of $\psi$) and  $h(x)|_{x_{min}}$ is a local minimum of $h(x(\psi))$. When $p$ is between the region of trapping and the horizon $h(x)|_{x_{max}}$ is a local minimum of $h(x(\psi))$ and  $h(x)|_{x_{min}}$ is a local maximum of $h(x(\psi))$. Since outside $\mathbb{T}_+(p) $ is always described by $\rho_2(\psi)$ and inside by $\rho_1(\psi)$, $\rho_{min}$ then always corresponds to $x_{min}$ and $\rho_{max}$ always corresponds to $x_{max}$. This also holds for $p\in\mathcal{A}$. For $\mathbb{T}_-(p) $ the correspondence is switched.
\end{remark}
\begin{remark}
We have only proved Theorem \ref{thm:1} for one standard observer at any particular point. However since any other observer at this point is related to the standard observer by a Lorentz transformation and the Lorentz transformations act as conformal transformations on the celestial sphere \cite[p.14]{penrose_spinors_1987}, the Theorem indeed holds for any observer. In \cite{grenzebach_aberrational_2015} the quantitative effect on the shape of the shadow of boosts in different directions are discussed.
\end{remark}
\begin{remark}
The parametrization for the trapped set on the celestial sphere of any standard observer in \cite{grenzebach_photon_2014,grenzebach_photon_2015} was derived for a much more general class of spacetimes. Therefore  Theorem \ref{thm:1} might actually hold for these cases as well. However this is beyond the scope of this paper.
\end{remark}
From Theorem \ref{thm:1} we immediately get the following Corollary:
\begin{cor}
For any observer at any regular point $p$ in the exterior region of a subextremal Kerr-Newman-Taub-NUT spacetime away from the axis of symmetry we have that for any $k\in \mathbb{T}_+(p)$ and any $\epsilon>0$
\begin{itemize}[noitemsep]
\item $B_\epsilon(k)\cap \Omega_{\mathcal{H}^+}(p)\neq \emptyset$
\item $B_\epsilon(k)\cap \Omega_{\scri^+}(p)\neq \emptyset$.
\end{itemize}
\end{cor}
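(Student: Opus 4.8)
The assertion is that every direction in $\mathbb{T}_+(p)$ lies on the boundary of both open regions $\Omega_{\mathcal{H}^+}(p)$ and $\Omega_{\scri^+}(p)$, so the plan is to combine three facts: (a) $\Omega_{\mathcal{H}^+}(p)$ and $\Omega_{\scri^+}(p)$ are open and, together with $\mathbb{T}_+(p)$, partition $S^2$; (b) by Theorem \ref{thm:1} the curve $\mathbb{T}_+(p)$ is a single embedded circle that splits $S^2$ into two open disks, each of which has $\mathbb{T}_+(p)$ as its full boundary; (c) one of those disks meets $\Omega_{\scri^+}(p)$ and the other meets $\Omega_{\mathcal{H}^+}(p)$. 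For (a): continuous dependence on initial data for the geodesic flow, as already invoked in the proof of Lemma \ref{lem:symmetry}, together with $\mathcal{H}^+$ and $\scri^+$ being open ``targets'', shows $\Omega_{\mathcal{H}^+}(p)$ and $\Omega_{\scri^+}(p)$ are open; disjointness and $S^2=\Omega_{\mathcal{H}^+}(p)\sqcup\Omega_{\scri^+}(p)\sqcup\mathbb{T}_+(p)$ are immediate from Definition \ref{def:futinf}. For (c): the directions $\rho=0$ and $\rho=\pi$ of \eqref{eq:tangentsph} have $K_E=0$ by \eqref{eq:comonsphere}, and a short computation reduces \eqref{eq:radial} to $\Sigma^2\dot r^2=\Sigma^2E^2>0$ along them, so $r$ is monotone; hence the ``outgoing'' direction $\rho=0$ reaches $\scri^+$ and the ``ingoing'' direction $\rho=\pi$ (the principal null direction towards the black hole) falls through $\mathcal{H}^+$, giving $(\rho=0)\in\Omega_{\scri^+}(p)$, $(\rho=\pi)\in\Omega_{\mathcal{H}^+}(p)$, and in particular both regions nonempty.

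For (b) I would read the structure of $\mathbb{T}_+(p)$ off the proof of Theorem \ref{thm:1}: away from the axis that proof presents $\mathbb{T}_+(p)$ as the graph $\rho=\rho_{\mathbb{T}_+}(\psi)$ of a continuous $2\pi$-periodic function taking values in the \emph{open} interval $(0,\pi)$, as one sees case by case ($\rho_2\in(\pi/2,\pi)$ or $\rho_1\in(0,\pi/2)$ in the first case, $\rho_4\in(0,\pi)$ in the second, and merely tangential contact with $\rho=\pi/2$ in the third). Consequently $\mathbb{T}_+(p)$ is a single embedded circle missing both poles $\rho=0,\pi$ of \eqref{eq:tangentsph}, and $S^2\setminus\mathbb{T}_+(p)=U\sqcup V$ with $U=\{\rho<\rho_{\mathbb{T}_+}(\psi)\}\cup\{\rho=0\}$ and $V=\{\rho>\rho_{\mathbb{T}_+}(\psi)\}\cup\{\rho=\pi\}$ both open, connected, and having boundary exactly $\mathbb{T}_+(p)$; alternatively one may simply quote the Jordan curve theorem on $S^2$.

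It then remains to assemble. The set $U$ is connected and contained in $\Omega_{\mathcal{H}^+}(p)\cup\Omega_{\scri^+}(p)$, a disjoint union of open sets, so $U$ lies wholly in one of them; since $(\rho=0)\in U\cap\Omega_{\scri^+}(p)$ we get $U\subseteq\Omega_{\scri^+}(p)$, and likewise $V\subseteq\Omega_{\mathcal{H}^+}(p)$ because $(\rho=\pi)\in V$. Now fix $k\in\mathbb{T}_+(p)$ and $\epsilon>0$; since $k\in\partial U$ the ball $B_\epsilon(k)$ meets $U\subseteq\Omega_{\scri^+}(p)$, and since $k\in\partial V$ it meets $V\subseteq\Omega_{\mathcal{H}^+}(p)$, which is exactly the pair of claimed intersections.

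I do not expect a serious obstacle here: granted Theorem \ref{thm:1}, the corollary is essentially a packaging statement. The step requiring the most care is (b) --- confirming from the proof of Theorem \ref{thm:1} that $\mathbb{T}_+(p)$ is a \emph{single} closed curve avoiding the two poles of \eqref{eq:tangentsph}, so that $S^2\setminus\mathbb{T}_+(p)$ has exactly two components, each abutting all of $\mathbb{T}_+(p)$ and each containing one of the two distinguished directions of (c). The assumption that $p$ is off the axis is precisely what makes the parametrization \eqref{eq:parametrized}, hence Theorem \ref{thm:1}, available (its denominators contain $\sin\theta(p)$); at an axis point one would run the same topological argument with the circle from Lemma \ref{lem:symmetry}.
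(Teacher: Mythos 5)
The paper offers no written proof of this corollary---it is stated as following ``immediately'' from Theorem \ref{thm:1}---and your argument is a correct, careful elaboration of exactly the intended reasoning: the two open sets $\Omega_{\mathcal{H}^+}(p)$ and $\Omega_{\scri^+}(p)$ partition $S^2\setminus\mathbb{T}_+(p)$, the trapped set is an embedded circle avoiding the poles by the case analysis in Theorem \ref{thm:1}, and the two complementary disks are identified via the ingoing/outgoing principal null directions (for which $K=0$ forces $L_z=\chi E$ and hence $\Sigma^2\dot r^2=\Sigma^2E^2>0$). Your proof is correct and takes essentially the same (implicit) approach as the paper.
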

\noindent So if we interpret the celestial sphere as initial data space for null geodesics starting at $p$, the Corollary is a coordinate independent formulation of the fact that trapping in the exterior region of subextremal Kerr-Newman-Taub-NUT black holes is unstable.\\

\section{Conclusion}
Despite the fact that trapping in a Kerr-Newman-Taub-NUT spacetime is much more complicated than in Schwarzschild, we showed in the present work that the topological structure of the future and past trapped set at any point in the exterior region in Kerr-Newman-Taub-NUT is in fact simple and identical to the situation in Schwarzschild. Even though the qualitative features of $\mathbb{T}_\pm(p)$ do not change under a change of parameters, the quantitative features do.\\
In \cite{hioki_measurement_2009,li_measuring_2014} it is discussed what information can be read of from the shadow at infinity. For points inside the manifold as considered in this work, this question will be answered in our upcoming paper \cite{shadows}. The here presented result will be the foundation for our upcoming work \cite{shadows}.

\subsection*{Acknowledgements} We are grateful to Marc Mars, J\'er\'emie Joudioux,  Lars Andersson, Volker Perlick and Siyuan Ma for helpful discussions and their comments on the manuscript. Special thanks goes to Blazej Ruba for his help with an earlier draft of this paper. 
\appendix
\section{}\label{app:A}
Let $f(x)$ be a smooth function on $[-1,1]$ vanishing at the boundary points with a unique maximum with value $1$ at zero. Furthermore, we consider that $f(0)=1$, $f'(0)=0$ and $f''(0)<0$. Then, we define:
\begin{align}
g_{1a}(x)&=\arcsin(f(x)): & [-1,0)\rightarrow&[0,\pi/2)\\
g_{2a}(x)&=\pi-\arcsin(f(x)):& [-1,0)\rightarrow&(\pi/2,\pi]\\
g_{1b}(x)&=\arcsin(f(x)):& (0,1]\rightarrow&[0,\pi/2)\\
g_{2b}(x)&=\pi-\arcsin(f(x)):& (0,1]\rightarrow&(\pi/2,\pi]
\end{align}
Note that $g_{1a/b}'(x)=-g_{2a/b}'(x)$. By standart analytic arguments one gets that $d/dx (\arcsin(f(x)))|_{x=0}= \sqrt{-f''(0)}$. Note that on $[-1,0)$ the derivative of $g_1(x)$ is positive while on $(0,1]$ it is negative. Together, this gives us that the function
\begin{equation}
    g(x)=\begin{cases}
    g_{1a}(x)& \text{ if } x\in [-1,0)\\
    \pi/2&  \text{ if } x=0\\
    g_{2b}(x)&  \text{ if } x\in (0,1]
    \end{cases}
\end{equation}
is smooth at $x=0$ and therefore on $[-1,1]$, with $d/dx(g(x))|_{x=0}=\sqrt{-f''(0)}$. 



\newcommand{\arxivref}[1]{\href{http://www.arxiv.org/abs/#1}{{arXiv.org:#1}}}
\newcommand{\mnras}{Monthly Notices of the Royal Astronomical Society}
\newcommand{\prd}{Phys. Rev. D}
\newcommand{\apj}{Astrophysical J.}

\bibliographystyle{amsplain}
\bibliography{shadow,kerr}

\providecommand{\bysame}{\leavevmode\hbox to3em{\hrulefill}\thinspace}
\providecommand{\MR}{\relax\ifhmode\unskip\space\fi MR }
\providecommand{\MRhref}[2]{%
  \href{http://www.ams.org/mathscinet-getitem?mr=#1}{#2}
}
\providecommand{\href}[2]{#2}
\begin{thebibliography}{10}

\bibitem{bardeen_black_1973}
J.M. Bardeen, \emph{Black {Holes} 215-39}, CRC Press, January 1973 (en).

\bibitem{carter_global_1968}
Brandon Carter, \emph{Global {Structure} of the {Kerr} {Family} of
  {Gravitational} {Fields}}, Physical Review \textbf{174} (1968), no.~5,
  1559--1571.

\bibitem{MR1647491}
S.~Chandrasekhar, \emph{The mathematical theory of black holes}, Oxford Classic
  Texts in the Physical Sciences, The Clarendon Press, Oxford University Press,
  New York, 1998, Reprint of the 1992 edition. \MR{1647491}

\bibitem{cunha_shadows_2016}
Pedro V.~P. Cunha, Carlos A.~R. Herdeiro, Eugen Radu, and Helgi~F. Runarsson,
  \emph{Shadows of {Kerr} black holes with and without scalar hair},
  International Journal of Modern Physics D \textbf{25} (2016), no.~09,
  1641021, arXiv: 1605.08293.

\bibitem{2014arXiv1402.7034D}
M.~{Dafermos}, I.~{Rodnianski}, and Y.~{Shlapentokh-Rothman}, \emph{{Decay for
  solutions of the wave equation on Kerr exterior spacetimes III: The full
  subextremal case |a| $<$ M}},  (2014), \arxivref{1402.7034}.

\bibitem{doeleman_event-horizon-scale_2008}
Sheperd~S. Doeleman, Jonathan Weintroub, Alan E.~E. Rogers, Richard Plambeck,
  Robert Freund, Remo P.~J. Tilanus, Per Friberg, Lucy~M. Ziurys, James~M.
  Moran, Brian Corey, Ken~H. Young, Daniel~L. Smythe, Michael Titus, Daniel~P.
  Marrone, Roger~J. Cappallo, Douglas C.-J. Bock, Geoffrey~C. Bower, Richard
  Chamberlin, Gary~R. Davis, Thomas~P. Krichbaum, James Lamb, Holly Maness,
  Arthur~E. Niell, Alan Roy, Peter Strittmatter, Daniel Werthimer, Alan~R.
  Whitney, and David Woody, \emph{Event-horizon-scale structure in the
  supermassive black hole candidate at the {Galactic} {Centre}}, Nature
  \textbf{455} (2008), 78--80.

\bibitem{grenzebach_aberrational_2015}
Arne Grenzebach, \emph{Aberrational {Effects} for {Shadows} of {Black}
  {Holes}}, arXiv:1502.02861 [gr-qc] (2015), arXiv: 1502.02861.

\bibitem{grenzebach_photon_2014}
Arne Grenzebach, Volker Perlick, and Claus L{\"a}mmerzahl, \emph{Photon regions
  and shadows of {Kerr}-{Newman}-{NUT} black holes with a cosmological
  constant}, Physical Review D \textbf{89} (2014), no.~12, 124004.

\bibitem{grenzebach_photon_2015}
Arne Grenzebach, Volker Perlick, and Claus L\"ammerzahl, \emph{Photon {Regions}
  and {Shadows} of {Accelerated} {Black} {Holes}}, International Journal of
  Modern Physics D \textbf{24} (2015), no.~09, 1542024, arXiv: 1503.03036.

\bibitem{griffiths2009}
J.~B. Griffiths and J.~Podolsk{\`y}, \emph{Exact space-times in {Einstein's}
  general relativity}, Cambridge University Press, 2009.

\bibitem{hioki_measurement_2009}
Kenta Hioki and Kei-ichi Maeda, \emph{Measurement of the {Kerr} spin parameter
  by observation of a compact object's shadow}, Physical Review D \textbf{80}
  (2009), no.~2, 024042.

\bibitem{PhysRevD.76.084036}
D.~Kubiz\ifmmode~\check{n}\else \v{n}\fi{}\'ak and
  P.~Krtou\ifmmode~\check{s}\else \v{s}\fi{}, \emph{Conformal {Killing}-{Yano}
  tensors for the {Pleba\ifmmode \acute{n}\else \'{n}\fi{}ski}-{Demia\ifmmode
  \acute{n}\else \'{n}\fi{}ski} family of solutions}, Phys. Rev. D \textbf{76}
  (2007), 084036.

\bibitem{li_measuring_2014}
Zilong Li and Cosimo Bambi, \emph{Measuring the {Kerr} spin parameter of
  regular black holes from their shadow}, Journal of Cosmology and
  Astroparticle Physics \textbf{2014} (2014), no.~01, 041--041, arXiv:
  1309.1606.

\bibitem{Manko-Ruiz}
V.~S. Manko and E.~Ruiz, \emph{Physical interpretation of the {NUT} family of
  solutions}, Classical and Quantum Gravity \textbf{22} (2005), no.~17, 3555.

\bibitem{shadows}
M.~{Mars}, C.~F. {Paganini}, and M.~A. {Oancea}, \emph{{The fingerprints of
  black holes - shadows and their degeneracies}}, In preparation (2017).

\bibitem{TaubNUTsingularities}
J.~G. Miller, \emph{Global analysis of the {Kerr}-{Taub}-{NUT} metric}, Journal
  of Mathematical Physics \textbf{14} (1973), no.~4, 486--494.

\bibitem{oldpaper}
C.~F. {Paganini}, B.~{Ruba}, and M.~A. {Oancea}, \emph{{Characterization of
  null geodesics on {Kerr} spacetimes}}, arXiv.org:1611.06927 (2016).

\bibitem{penrose_spinors_1987}
Roger Penrose and Wolfgang Rindler, \emph{Spinors and {Space}-{Time}: {Volume}
  1, {Two}-{Spinor} {Calculus} and {Relativistic} {Fields}}, Cambridge
  University Press, Cambridge u.a., May 1987 (English).

\bibitem{synge_escape_1966}
J.~L. Synge, \emph{The escape of photons from gravitationally intense stars},
  Monthly Notices of the Royal Astronomical Society \textbf{131} (1966), 463.

\end{thebibliography}

\end{document}